\newif\ifels
\renewcommand{\cite}{\citep}
\definecolor{qqqqff}{rgb}{0.,0.,1.}
\definecolor{ccqqqq}{rgb}{0.8,0.,0.}
\definecolor{zzttqq}{rgb}{0.6,0.2,0.}
\definecolor{ffqqqq}{rgb}{1.,0.,0.}
\definecolor{qqqqff}{rgb}{0.,0.,1.}
\newtheorem{result}{\ }[section]
\theoremstyle{changebreak}                
\newtheorem{thm}[result]{Theorem}
\newtheorem{lem}[result]{Lemma}
\newtheorem{rem}[result]{Remark}
\newtheorem{cor}[result]{Corollary}
\newtheorem{prop}[result]{Proposition}
\newtheorem{eg}[result]{Example}
\newenvironment{proof}
 {{\sl Proof.}\hspace*{1 ex}}%
 {{\nopagebreak\hspace*{\fill}$\Box$\par\vspace{12pt}}}
\newcommand{\transpose}[1]{{#1}^{\top}}
\newcommand{\qedd}{\hfill \ensuremath{\blacksquare}}
\newcommand{\diag}[1]{\mathsf{diag}(#1)}
\newcommand{\rank}[1]{\mathsf{rank}(#1)}
\newcommand{\leo}[1]{{\color{black}#1}}
\begin{document}

\ifels
\title{An impossible utopia in distance geometry}

\author[famat]{Germano Abud}
\ead{germano.abud@ufu.br}

\author[iftm]{Jorge Alencar}
\ead{jorgealencar@iftm.edu.br}

\author[imecc]{Carlile Lavor}
\ead{clavor@ime.unicamp.br}

\author[lix]{Leo Liberti}
\ead{liberti@lix.polytechnique.fr}

\author[irisa]{Antonio Mucherino}
\ead{antonio.mucherino@irisa.fr}

\address[famat]{FAMAT, Federal University of Uberl\^andia, Minas Gerais, Brazil}

\address[iftm]{IFTM, Federal Institute of Triângulo Mineiro, Brazil}

\address[imecc]{IMECC, University of Campinas, Brazil}

\address[lix]{LIX CNRS, \'Ecole Polytechnique, Institut Polytechnique de Paris, F-91128 Palaiseau, France}

\address[irisa]{IRISA and University of Rennes I, Rennes, France}

\else
\thispagestyle{empty}
\begin{center}   
{\LARGE An impossible utopia in distance geometry}
\par \bigskip
{\sc Germano Abud${}^1$, Jorge Alencar${}^2$, Carlile Lavor${}^3$, Leo Liberti${}^{4}$, Antonio Mucherino${}^5$} 
\par \bigskip
\begin{minipage}{15cm}
\begin{flushleft}
{\small
  \begin{itemize}
  \item[${}^1$] {\it FAMAT, Federal University of Uberl\^andia, Minas Gerais, Brazil}  \\ Email:\url{germano.abud@ufu.br}
  \item[${}^2$] {\it IFTM, Federal Institute of Triângulo Mineiro, Brazil} \\ Email:\url{jorgealencar@iftm.edu.br}
  \item[${}^3$] {\it IMECC, University of Campinas, Brazil} \\ Email:\url{clavor@ime.unicamp.br} 
  \item[${}^4$] {\it LIX CNRS, \'Ecole Polytechnique, Institut Polytechnique de Paris, F-91128 Palaiseau, France} \\ Email:\url{liberti@lix.polytechnique.fr}
  \item[${}^4$] {\it IRISA and University of Rennes I, Rennes, France} \\ Email:\url{antonio.mucherino@irisa.fr}
\end{itemize}
}
\end{flushleft}
\end{minipage}
\par \medskip \today
\end{center}
\par \bigskip
\fi

\begin{abstract} 
  The Distance Geometry Problem asks for a realization of a given weighted graph in $\mathbb{R}^K$. Two variants of this problem, both originating from protein conformation, are based on a given vertex order (which abstracts the protein backbone). Both variants involve an element of discrete decision in the realization of the next vertex in the order using $K$ preceding (already realized) vertices. The difference between these variants is that one requires the $K$ preceding vertices to be contiguous. The presence of this constraint allows one to prove, via a combinatorial counting of the number of solutions, that the realization algorithm is fixed-parameter tractable. Its absence, on the other hand, makes it possible to efficiently construct the vertex order directly from the graph. Deriving a combinatorial counting method without using the contiguity requirement would therefore be desirable. In this paper we prove that, unfortunately, such a counting method cannot be devised in general.
  \ifels
  \begin{keyword}
  \else
  \\{\bf Keywords}:
  \fi
  DGP, DMDGP, DDGP, Branch-and-Prune, partial reflection, solution symmetry.
  \ifels
  \end{keyword}
  \fi  
\end{abstract}

\ifels
\maketitle
\fi


\section{Introduction}
\label{s:intro}
We consider the following problem \cite{ddgp,dgbook}, which arises in the determination of protein structure from distance data \cite{muchbook,dgp-sirev}, as well as in the study of rigid graphs constructed by ``Henneberg type 1 moves'' \cite{henneberg1911,tay-whiteley}:
\begin{quote}
  {\sc Discretizable Distance Geometry Problem} (DDGP). Given an integer $K>0$, a simple undirected graph $G=(V,E)$ with an edge weight function $d:E\to\mathbb{R}_+$, and a vertex order $<$ on $V=(1,\ldots,n)$ such that:
  \begin{enumerate}[(i)]
  \item $G[U_0]$ (the subgraph of $G$ induced by $U_0$) is a clique of size $K$, where $U_0=\{1,\ldots,K\}$
  \item $\forall j\in\{K+1,\ldots,n\}\ \exists U_j\subseteq \{1,\ldots,j-1\} \quad |U_j|=K\land \forall i\in U_j \ \{i,j\}\in E$,
  \end{enumerate}
  determine if there is an embedding $x:V\to\mathbb{R}^K$ such that:
  \begin{equation}
    \forall \{i,j\}\in E \quad \|x_i - x_j\|_2^2 = d_{ij}^2. \label{dgp}
  \end{equation}
\end{quote}

The DDGP is a subclass of the more general {\sc Distance Geometry Problem} (DGP) \cite{dgpbook,dgbook}: given $K,G,d$ as above, determine if there is a realization $x$ satisfying Eq.~\eqref{dgp}. An embedding satisfying Eq.~\eqref{dgp} is called a {\it realization}. With a slight abuse of notation we shall also refer to an ``invalid realization'' to denote an embedding which does not satisfy Eq.~\eqref{dgp}, as well as, pleonastically, to a ``valid realization''.
\leo{We also note that the vertex order and the sequence of sets $U_j$ need not be unique.}

Note that a realization $x$ in $\mathbb{R}^K$ of a graph on $n$ vertices can be represented as an $n\times K$ matrix. The $n\times n$ symmetric zero-diagonal matrix having $\|x_i-x_j\|_2^2$ as its $(i,j)$-th entry is a {\it squared Euclidean Distance Matrix} (EDM). It turns out that $G=-\frac{1}{2}JDJ$ (where $J=I_n-\frac{1}{n}\mathbf{1}\transpose{\mathbf{1}}$ is the {\it centering matrix} and $\mathbf{1}$ is the all-one vector) is the {\it Gram matrix} of the realization $x$, i.e.~$x\transpose{x}=G$ \cite{schoenberg,dattorro}. Moreover, $D=\diag{G}\transpose{\mathbf{1}}-2G+\mathbf{1}\transpose{\diag{G}}$, which implies that $\rank{D}\le\rank{G}+2$ \cite{vetterli}; since $\rank{G}=\rank{x}\le K$, we also have $\rank{D}\le K+2$.

Given a realization $x$, we can compute the corresponding EDM by evaluating all Euclidean distances between $x_i,x_j$. Given an EDM $D$ for $x$, we can compute a valid realization by obtaining the Gram matrix $G$ in function of $D$ as explained above, and then factoring $G$ using the spectral decomposition $G=\transpose{P}\Lambda P$, where $P$ is a matrix of eigenvectors and $\Lambda$ a diagonal matrix of corresponding eigenvalues. Then $y=\transpose{P}\sqrt{\Lambda}$ is a valid realization of $D$ (not necessarily equal to $x$).

As mentioned above, the DDGP is a subclass of the DGP including all instances having a vertex order that ensures that the first $K$ vertices form a clique in $G$, and for each remaining vertex $j$ there is a set $U_j$, of $K$ vertices, each of which is adjacent to and precedes $j$ in the given order. This structure allows the application of a certain geometric operation called {\it trilateration} \cite{dgbook} (see Sect.~\ref{s:trilateration} below). Trilateration determines, almost surely, at most two positions $x_j^+,x_j^-\in\mathbb{R}^K$ for vertex $j$ using the distances $d_{ij}$ for each $i\in U_j$. We remark that, when generalized to arbitrary $K$, trilateration is sometimes called $K$-lateration. Moreover, it takes polynomial time in $K$ \cite{alencar2}. Since $K$ is usually fixed in applications, it takes constant time.

Trilateration is a construction also known as ``Henneberg type 1'' \cite{henneberg1911,tay-whiteley}, which entails that all DDGP graphs are rigid. In particular, they have a finite number of incongruent realizations. This follows by definition of rigidity: every isometric continuous motion of a subset of vertices must involve all vertices, and hence be a congruence.

The DDGP is also a super-class of the {\sc Discretizable Molecular Distance Geometry Problem} (DMDGP), which requires each $U_j$ to consist of the $K$ immediate predecessors of $j$. The DMDGP \cite{yemini,dmdgp}, the DDGP \cite{ddgp} and the DGP \cite{saxe79} are all $\mathbf{NP}$-complete. Given a DGP instance, recognizing whether it is DDGP is known as the {\sc Trilateration Ordering Problem} (TOP); recognizing whether it is DMDGP is known as the {\sc Contiguous TOP} (CTOP). It turns out that TOP is $\mathbf{NP}$-complete, but it is in $\mathbf{P}$ for every fixed $K$, whereas CTOP is $\mathbf{NP}$-complete even for fixed $K$ \cite{orders-dam}.  For many protein graphs, however, it is possible to construct a contiguous trilateration order efficiently from the protein backbone \cite{jogomdgp,sidechains,neworder}, which makes the DMDGP a practically interesting class \cite{souza}. 

Repeated trilateration applied to DDGP and DMDGP instances yields an exact algorithm (in the real RAM model \cite{blum}), as follows. The realization of the initial clique $G[U_j]$ of size $K$ can be carried out in constant time (assuming $K$ fixed) by trilateration; then for each subsequent $j$ we construct two alternative positions $x^+_j,x^-_j$ (again in constant time by trilateration), and branch on them. We verify whether neither, one of them, or both satisfy the distances to the predecessors of $j$ not in $U_j$ (if any), and prune those which do not. We obtain a tree search, called {\it Branch-and-Prune} (BP) \cite{protti,lln5}, over the set $S$ of possible positions for vertices $\{K+1,\ldots,n\}$. This tree $T$ has width at most $2^{n-K}$ and depth at most $n$. If $T$ has depth $<n$, then no positions could be found for vertex $n$, which means that the instance is NO. Otherwise, the instance is YES; and any sequence $(x_1,\ldots,x_K,\ldots,x^{s_j}_j,\ldots,x^{s_n}_n)$ of positions found by the BP for all vertices in $V$, where $(s_j\;|\;K<j\le n)$ is a sequence of $+,-$, is a realization of $G$ which certifies a YES. We recall that this certificate is only valid in the real RAM model, which describes a computer able to represent real numbers exactly. In practice, we take $d:E\to\mathbb{Q}_+$, perform operations in floating point, and attempt at minimizing numerical errors using a variety of techniques \cite{iBP-conf,muchbook,bipbip,goncalves,jcim}.

We remark that the tree $T$ is a graph defined over $S\subset\mathbb{R}^K$, and is therefore itself naturally embedded in $\mathbb{R}^K$. Limited to the DMDGP only, two invariant groups of the embedding of $T$ were described in \cite{powerof2-conf,powerof2}. Both groups are reflection groups. The {\it discretization group} is the invariant group of maximum width trees $T$ with $2^{n-K}$ leaf nodes, where each vertex $j$ is adjacent only to the $K$ predecessors in $U_j$ (and possibly some successors); unsurprisingly, it has cardinality power of two. The {\it pruning group}, a subgroup of the discretization group, is the invariant group of the more general case where vertices $j$ may be adjacent to the predecessors in $U_j$ but also to other predecessors. More surprisingly, the pruning group also has power of two many elements. The simple expressions of the cardinalities of these groups derived in \cite{powerof2} were used to argue that the BP algorithm is Fixed-Parameter Tractable (FPT) \cite{bppolybook}. It also allowed the determination of the number of incongruent solutions \cite{liberti-gsi13}, and of a new ``pruning device'' for the BP algorithm \cite{symmBPjbcb} based on symmetry. On the other hand, it was also shown that random DMDGP instances are unlikely to possess large pruning groups \cite{zoo}, and, in particular, that this likelihood rapidly decreases with size.

The techniques used for the structure determination of the discretization and pruning groups are specific to the DMDGP. No easy extension to the DDGP was found so far using those techniques (see \cite{optlet18b} for an attempt). In this paper, we propose a new theoretical analysis of the number of solutions of the DDGP. Specifically, we show that an {\it a priori} computation (i.e.~before running the BP algorithm on the given instance) of the number of incongruent solutions of a DDGP instance is only possible in those instances for which each set of adjacent predecessors $U_j$ induces a clique of size $K$ in the given graph $G$. For those instances, we prove that the number of incongruent realizations is almost surely a power of two, similarly to the DMDGP.

The rest of this paper is organized as follows. In Sect.~\ref{s:prelim}, we analyse the difference between DMDGP and DDGP, we recall the trilateration operation, and give a formal definition of ``combinatorial counting''. In Sect.~\ref{s:cancount} we prove our impossibility result, and present a simple subclass of the DDGP where combinatorial counting is possible. 

\section{Preliminary notions and definitions}
\label{s:prelim}
We recall that most of the properties discussed above only hold almost surely: this occurs because trilateration may fail to work as expected with probability zero, notably when the points realizing vertices in $U_j$ are not in {\it general position} \cite[p.~20]{graverbook}: if $y$ is a realization of $G$ in general position and $W\subseteq V$ then, for each $W\subseteq V$ with $|W|=h+1$, $y[W]$ spans an affine subspace of dimension $h$.

It is always possible to construct infinite families of instances where the edge weight function $d$ is carefully chosen so that there may be more than two possible positions for vertex $j$ using trilateration \cite{powerof2}. But these families all have measure zero in the set of all DDGP (and DMDGP) instances. The same holds for all of the results in this paper. For brevity, we shall refer to ``probability zero instances'' (those over which trilateration fails) and ``probability one instances'' (the rest) --- also see Sect.~\ref{s:counting} below.

The only difference between DMDGP and DDGP is that the sets $U_j$ of adjacent predecessors must also be immediate in the former case, namely $U_j=\{j-K-1,\ldots,j-1\}$. This directly implies that each $G[U_j]$ must be a clique of size $K$ in $G$, which is the property which made it possible to study the symmetries and number of solutions of the DMDGP using the techniques sketched above. DDGP instances may not have this property, however.

For each $j\in V$ we let $\ell(j)=\max_< U_j$, and $\bar{U}_j=U_j\cup\{j\}$. Moreover, let:
\begin{itemize}
\item $N(j)=\{i\in V\;|\;\{i,j\}\in E\}$ be the {\it neighbourhood} of $j$;
\item $\bar{N}(j)=N(j)\cup \{j\}$.
\end{itemize}
We partition the edge set $E$ into the {\it discretization edges} $E_D=\{\{i,j\}\in E \;|\;i\in U_j\}$ and {\it pruning edges} $E_P=E\smallsetminus E_D$.

\subsection{The trilateration operation}
\label{s:trilateration}
Given $K$ points $\{x_1,\ldots,x_K\}\subset\mathbb{R}^K$ and their distances $d_{i}$ to an unknown point $y\in\mathbb{R}^K$, $y$ can be determined by solving the quadratic system of $K$ equations in $K$ unknowns $y=(y_1,\ldots,y_K)$
\begin{equation}
  \forall i\le K \quad \|x_i-y\|_2^2 = d_i^2.
  \label{trilat}
\end{equation}
The trilateration operation is as follows:
\begin{enumerate}
\item Rewrite Eq.~\eqref{trilat} as $\forall i\le K \ \|x_i\|_2^2 + \|y\|_2^2 - 2 x_i y = d_i^2$.
\item Arbitrarily choose one of these $K$ equations, e.g.~the $K$-th one, and form the system of $K-1$ equations in $K$ unknowns given by the difference of the $i$-th equation with the $K$-th one; this removes the term $\|y\|_2^2$ from all equations, leaving the following (after some rearrangements):
  \begin{equation}
    \forall i<K \  2 (x_i - x_K) y = (\|x_i\|_2^2 - \|x_K\|_2^2) - (d_i^2 - d_K^2), \label{trilatlin}
  \end{equation}
  which is a linear underdetermined system in $y$.
\item We assume that Eq.~\eqref{trilatlin} has full rank $K-1$ with probability one, so we can express $K-1$ of the unknowns in function of the remaining one, which we assume wlog to be $y_K$:
  \begin{equation}
    \forall i<K \ y_i = b_i - B_iy_K,
    \label{trilatlinK}
  \end{equation}
  for some $B,b\in\mathbb{R}^K$ \cite[\S 3.3]{dgbook}.
\item We replace $y_1,\ldots,y_{K-1}$ in $\|x_K-y\|_2^2 = d_K^2$ and obtain a quadratic equation in the single unknown $y_K$. We solve this equation and obtain two solutions $y^+_K,y^-_K$ with probability 1, yielding two positions $y^+,y^-$ for $y$ by using Eq.~\eqref{trilatlinK}. 
\item Finally, we check that $y^+,y^-$ satisfy the original equations Eq.~\eqref{trilat}. If they do, the system has two solutions with probability 1. Otherwise, it is infeasible.
\end{enumerate}
           
We denote by $S_j=\tau(y,U_j)$ the trilateration operation in order to determine the position of vertex $j\le n$ in function of the positions $y[U_j]=(y_{i_1},\ldots,y_{i_K})$. We remark that either $S_j=\varnothing$ or $|S_j|=2$ almost surely. 

\subsection{What we mean by ``counting''}
\label{s:counting}
In the real RAM model, the DDGP problem contains an uncountable number of instances, since the edge weight function $d$ maps to the real numbers. This allows us to make statements with some probability (usually zero or one). The trilateration operation, for example, determines zero or two positions for a vertex with probability one (Sect.~\ref{s:trilateration}). If certain special relations between the edge weights hold, it might also determine a single position, or uncountably many \cite{dvop}. These relations between the edge weights induce relations between the points in the valid realizations of the graph, which turn out to be linear equations such as Eq.~\eqref{trilatlin}.

It is intuitive to think that when one or more edge weights continuously change their values within some small enough interval, the positions of the adjacent vertices typically trace continuous trajectories in space.
\begin{eg}
  \label{eg:triangle}
  Consider a triangle graph over $V=\{1,2,3\}$ with $d_{12}=2$ and $d_{13}=d_{23}\in[1,2]$, embedded in $\mathbb{R}^2$. If $x_1=(0,0)$ and $x_2=(2,0)$, then $x_3$ moves continuously on the segment $(1,0) + t(0,1)$ for $t\in[-\sqrt{3},\sqrt{3}]$ as $d_{13},d_{23}$ move continuously in $\alpha=[1,2]$ (see Fig.~\ref{eg21}, left).
  
  \begin{figure}[!ht]
    \begin{center}
      \includegraphics[width=7cm]{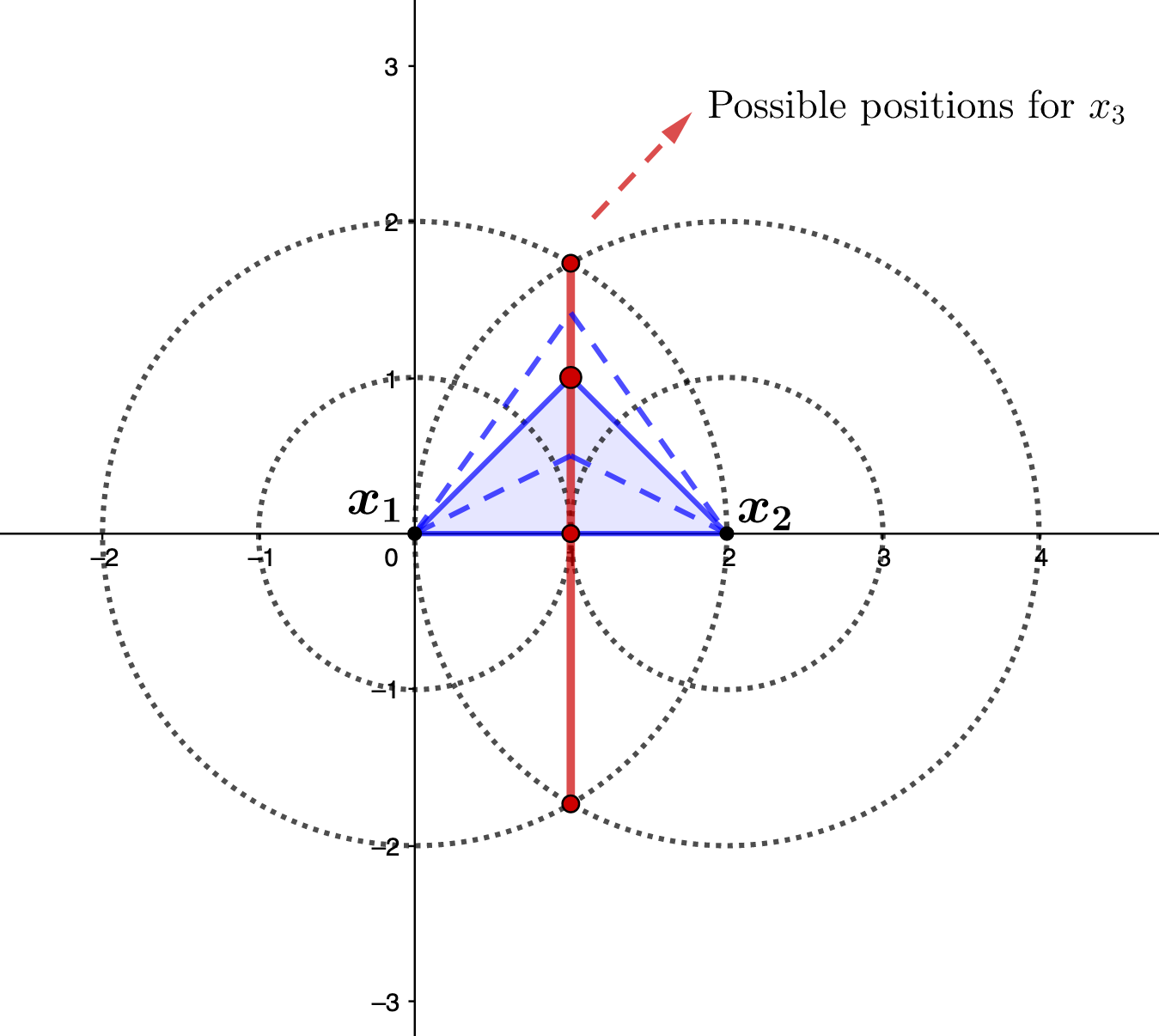}
      \hspace*{0.5cm}
      \includegraphics[width=7cm]{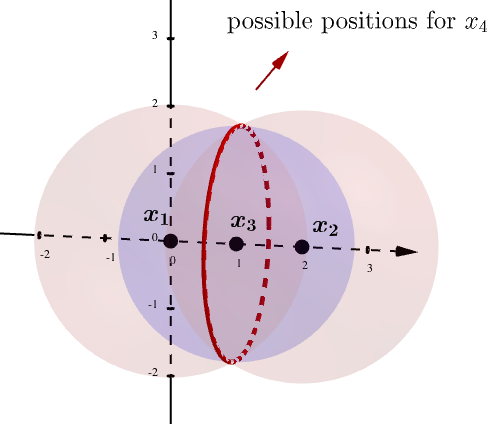}
    \end{center}
    \caption{The two situations depicted in Eg.~\ref{eg:triangle}.}
    \label{eg21}
  \end{figure}
    
  At $t=0$ (corresponding to $d_{13}=d_{23}=1$) the three points $x_1,x_2,x_3$ are aligned, and therefore their affine span has deficient rank equal to $1$: this is a ``probability zero'' realization. All of the other values in the interval define a nontrivial isosceles triangle having full affine span rank $2$. 

  A different choice of $\alpha$ might have yielded an interval where the affine span rank of the associated realization is always full, e.g.~$\alpha=[1.1,2]$. For more complicated graphs it is possible to have situations where both endpoints of the interval yield realizations of deficient ranks.

  Suppose now that we add another vertex (labelled by $4$) to the triangle graph above. We let $4$ be adjacent to $1,2,3$ with edge weights $d_{14}=d_{24}=2$ and $d_{34}=\sqrt{3}$. We consider realizations in $\mathbb{R}^3$. When we apply the trilateration operation to the probability zero realization $x_1=(0,0)$, $x_2=(2,0)$, $x_3=(1,0)$, $x_4$ can move in a circle of radius $\sqrt{3}$ and centered at $(1,0)$. In other words, this trilateration operation finds an uncountable number of positions for $x_4$ (see Fig.~\ref{eg21}, right).  \qedd
\end{eg}
In light of Example \ref{eg:triangle}, we can also define probability zero events over DDGP instances as follows: we construct an uncountable set of DDGP instances where the edge weights are allowed to vary over given intervals, and show that the probability zero event only holds at a finite or countable number of values of the weights in the corresponding intervals.

The goal of this paper is to count realizations of DDGP instances {\it a priori}. The counting methods we consider may not take any feature of the solution into account (for otherwise, the counting problem would be solved by finding all of the finitely many incongruent solutions and counting them). Moreover, we want to avoid events leading to failure of the trilateration operation, such as e.g.~those shown in Example \ref{eg:triangle}. Since these events happen with probability zero, they can be ignored by only considering  {\it combinatorial counting methods}, i.e.~those methods which only consider the graph topology. 

\section{Can we count DDGP realizations combinatorially?}
\label{s:cancount}
In this section we claim that we can only (combinatorially) count realizations for a special subclass of DDGP instances, namely when $U_j$ induces a clique of size $K$ in $G$ for all $K<j\le n$. Our argument is based on the (easier) case of YES instances with no pruning edges. 

For each $j\in\{1,\ldots,n\}$ let $a_j$ be the number of positions, found by the BP algorithm for vertex $j$, which eventually lead to a valid realization of $G$. We assume that the given DDGP instance is YES, and, wlog, that $a_1=\cdots=a_K=1$. Moreover, since the only possible choice for $U_{K+1}$ is $\{1,\ldots,K\}$, which are the immediate predecessors of $K+1$, the DMDGP and DDGP coincide on instances of size $K+1$, which implies that $a_{K+1}=2$ \cite{dmdgp}.

We start with the trivial observation that, by trilateration, there are two positions for vertex $j$ for each position of vertex $\ell(j)$:
  \begin{equation}
    a_j\le 2 a_{\ell(j)}.
    \label{ub}
  \end{equation}
  We now look at conditions which might cause $a_j$ to be strictly less than $2a_{\ell(j)}$, discounting those which hold with probability zero. More precisely, we assume that the given DDGP instance is a probability one instance, and that all realizations of $G$ are in general position. 

Given a realization $y$ of $G=(V,E)$ we let $D^y(V)$ be the EDM of $y$. If $W\subseteq V$ we also let $D^y(W)$ be the EDM of $y[W]$. For brevity we also denote $D^y(W)$ simply by $D(W)$, if no ambiguity should arise in $y$.
\begin{rem}
  If $y$ is a realization of $G$ and $K<j\le n$, then we can write the EDM of $y[\bar{U}_j]$ in the form below:
  \begin{eqnarray}
    D^y(\bar{U}_j) &=& \left(\begin{array}{cc} D^y(U_j) & d^2_{\cdot,j} \\ d^2_{j,\cdot} & 0 \end{array}\right)\label{DM} \nonumber \\
    &=& \left(\begin{array}{cccc|c}
      0 & \|y_{i_1}-y_{i_2}\|_2^2 & \cdots & \|y_{i_1}-y_{i_K}\|_2^2 & d_{i_1,j}^2 \\
      \|y_{i_2}-y_{i_1}\|_2^2 & 0 & \cdots & \|y_{i_2}-y_{i_K}\|_2^2 & d_{i_2,j}^2 \\
      \vdots & \vdots & \ddots & \vdots & \vdots \\
      \|y_{i_K}-y_{i_1}\|_2^2 & \|y_{i_K}-y_{i_2}\|_2^2 & \cdots & 0 & d_{i_K,j}^2 \\ [0.3em] \hline 
      d_{j,i_1}^2 & d_{j,i_2}^2 & \cdots & d_{j,i_K}^2 & 0
    \end{array}\right),\label{DM2}
  \end{eqnarray}
  where $D^y(U_j)$ is expressed in function of $y$, whereas the last row and column is expressed in function of the known edge weights $d$. 
\end{rem}

\begin{lem}
  \label{ifflem}
  Consider a YES DDGP instance and a valid realization $y$ of $G$. If $D^y(U_j)$ is a valid EDM, then $D^y(\bar{U}_j)$ is also a valid EDM.
\end{lem}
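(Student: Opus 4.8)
The plan is to observe that, under the stated hypotheses, the bordered matrix displayed in the Remark is forced to be the genuine squared-distance matrix of an honest point set, so that validity follows from Schoenberg's criterion. First I would note that, since $y$ is a valid realization of $G$ and every $i\in U_j$ satisfies $\{i,j\}\in E$ by definition of $U_j$, we have $d_{ij}^2=\|y_i-y_j\|_2^2$ for all $i\in U_j$. Hence the last row and column of $D^y(\bar U_j)$ in Eq.~\eqref{DM2}, although written in terms of the edge weights $d$, coincide with the actual squared distances from $y_j$ to the points $y[U_j]$. Therefore $D^y(\bar U_j)$ is exactly the squared Euclidean distance matrix of the finite set $\{y_i : i\in\bar U_j\}\subset\mathbb{R}^K$, and any such matrix is a valid EDM: indeed $-\tfrac12 J D^y(\bar U_j) J$ is the Gram matrix of these points after centering, hence positive semidefinite. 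In this reading the hypothesis that $D^y(U_j)$ is valid holds automatically, since $y[U_j]$ are genuine points.

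To make explicit how the hypothesis enters---and to mirror the block structure of Eq.~\eqref{DM2}---I would phrase the argument through a Schur complement, which is also the form that will generalize to the clique analysis later. Fixing one vertex $i_1\in U_j$ as origin, validity of an EDM is equivalent to positive semidefiniteness of the Gram matrix of the remaining difference vectors. For $\bar U_j$ this Gram matrix has the bordered form $B_{\bar U}=\left(\begin{smallmatrix} B_U & c \\ \transpose{c} & \beta\end{smallmatrix}\right)$, where $B_U$ is the Gram matrix associated with $D^y(U_j)$, the scalar $\beta=d_{i_1,j}^2$ is a border distance, and $c$ collects the inner products $\langle y_i-y_{i_1},\,y_j-y_{i_1}\rangle$ for $i\in U_j\setminus\{i_1\}$. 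The hypothesis that $D^y(U_j)$ is a valid EDM gives $B_U\succeq 0$; by the Schur complement criterion it then remains to check that $c$ lies in the range of $B_U$ and that $\beta-\transpose{c}B_U^{+}c\ge 0$. The latter quantity is precisely the squared distance from $y_j$ to the affine hull of $y[U_j]$, so since $y_j$ is an actual point of $\mathbb{R}^K$ it is a nonnegative real number, whence $B_{\bar U}\succeq 0$ and $D^y(\bar U_j)$ is a valid EDM.

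The step I expect to require the most care is the Schur-complement bookkeeping: one must check that $c$ indeed lies in $\mathrm{range}(B_U)$ (which follows because $c=\transpose{V}(y_j-y_{i_1})$ with $V$ the matrix of difference vectors, so $c$ lies in $\mathrm{range}(\transpose{V})=\mathrm{range}(B_U)$), and that the complement really equals the squared height of $y_j$ over $\mathrm{aff}(y[U_j])$ rather than some unrelated quantity. Here the probability-one, general-position hypothesis is what guarantees that $y[U_j]$ spans a $(K-1)$-flat, so that $B_U$ has the expected rank $K-1$ and the geometric reading of the Schur complement is legitimate. Once this is in place the nonnegativity is immediate from the fact that $y_j$ is a genuine point realizing the edge distances, and no further computation is needed.
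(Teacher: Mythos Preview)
Your first paragraph is correct and matches the paper's approach exactly: since $y$ is a valid realization and each $\{i,j\}$ with $i\in U_j$ is an edge, the border entries in Eq.~\eqref{DM2} coincide with the true squared distances, so $D^y(\bar U_j)$ is the genuine EDM of the point set $y[\bar U_j]$ and is automatically valid (the paper phrases this as a one-line contrapositive). The Schur-complement argument in your second and third paragraphs is correct but superfluous here---the direct observation already settles the lemma without any rank, range, or general-position hypotheses---so you can drop it.
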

\begin{proof}
  Assume $D^y(U_j)$ is a valid EDM but $D^y(\bar{U}_j)$ is not: then $y$ cannot be a valid realization of $G$, against the assumption.
\end{proof}

\begin{prop}
  \label{iffthm}
  Consider a YES DDGP instance, and let $j\in V$ such that $K<j\le n$. If $D^y(U_j)$ is a valid EDM for any possible realization $y$ of $G$, then
  \begin{equation}
    a_j=2 a_{\ell(j)}.
    \label{ubeq}
  \end{equation}
\end{prop}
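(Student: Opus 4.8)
The plan is to pair the upper bound already recorded in Eq.~\eqref{ub} with a matching lower bound $a_j\ge 2a_{\ell(j)}$, so that Eq.~\eqref{ubeq} follows by squeezing. Write $P_j$ for the set of positions that vertex $j$ takes in some valid realization of $G$, so that $a_j=|P_j|$, and recall that in the present no-pruning-edge regime a position survives the BP pruning step exactly when it is produced by trilateration from a valid placement of its predecessors. The overall strategy is to show that each of the $a_{\ell(j)}$ positions available for $\ell(j)$ yields, through $\tau(y,U_j)$, exactly two distinct elements of $P_j$, and that positions spawned by different positions of $\ell(j)$ never collide.

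First I would fix an arbitrary $p\in P_{\ell(j)}$ and pick a valid realization $y$ with $y_{\ell(j)}=p$. Since $\ell(j)=\max_< U_j$, every vertex of $U_j$ precedes or equals $\ell(j)$, so $y$ already places all of $U_j$, and the hypothesis guarantees that $D^y(U_j)$ is a valid EDM. Lemma~\ref{ifflem} then upgrades this to $D^y(\bar{U}_j)$ being a valid EDM, so the system solved by $\tau(y,U_j)$ is consistent; invoking the probability-one and general-position assumptions recalled in Sect.~\ref{s:trilateration}, trilateration returns exactly two positions $x_j^+\neq x_j^-$, which are reflections of one another across the hyperplane affinely spanned by $y[U_j]$. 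As there are no pruning edges, neither can be discarded, so both lie in $P_j$: this already produces two valid positions of $j$ for every position of $\ell(j)$, and $x_j^+\neq x_j^-$ settles distinctness within a single frame.

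The remaining, and main, obstacle is distinctness across different choices of $p$, i.e.\ ruling out that a position of $j$ obtained from one position of $\ell(j)$ equals one obtained from another. This is exactly where the clique-type hypothesis on $U_j$, rather than the mere membership $\ell(j)\in U_j$, must be used: under it the internal distances of $U_j$ are fixed by the edge weights rather than by the branching, so the rigid block $U_j$ is pinned down once $y_{\ell(j)}$ is chosen, and distinct $p,p'\in P_{\ell(j)}$ give affinely distinct copies of $U_j$, from which trilateration cannot return a common point for $j$ with probability one. Making this precise amounts to showing that the natural map $P_j\to P_{\ell(j)}$ is exactly two-to-one and onto, and I expect the delicate part to be justifying that the position of $\ell(j)$ genuinely indexes the admissible configurations of $U_j$ without collision. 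Granting this, the $a_{\ell(j)}$ positions of $\ell(j)$ yield $2a_{\ell(j)}$ pairwise distinct elements of $P_j$, so $a_j\ge 2a_{\ell(j)}$, which together with Eq.~\eqref{ub} forces Eq.~\eqref{ubeq}.
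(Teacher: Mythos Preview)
Your overall scaffold---squeeze $a_j$ between Eq.~\eqref{ub} and a matching lower bound, and obtain the two positions for $j$ from Lemma~\ref{ifflem} plus trilateration---is close in spirit to the paper's argument. The paper's own proof is in fact shorter and more direct: it simply says that each of the $a_{\ell(j)}$ positions of $\ell(j)$ gives a valid realization $y[U_j]$, hence by Lemma~\ref{ifflem} a valid EDM $D^y(\bar{U}_j)$, hence two trilaterated positions for $j$, and concludes Eq.~\eqref{ubeq} without any explicit treatment of distinctness or of a map $P_j\to P_{\ell(j)}$.

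There is, however, a genuine conceptual slip in the part of your plan that goes beyond the paper. To rule out collisions across different $p,p'\in P_{\ell(j)}$ you invoke ``the clique-type hypothesis on $U_j$'' and assert that ``the internal distances of $U_j$ are fixed by the edge weights rather than by the branching''. That is \emph{not} the hypothesis of Proposition~\ref{iffthm}. The proposition only assumes that $D^y(U_j)$ is a valid EDM for every realization $y$; it does not assume that this EDM is the same for every $y$, nor that $G[U_j]$ is a clique. Under the stated hypothesis the pairwise distances inside $U_j$ may well vary from one realization to another---they merely have to assemble into a valid EDM each time. The assumption you are using is precisely the extra condition of Corollary~\ref{clique}, which is \emph{deduced from} the proposition, not fed into it. So your non-collision argument rests on a hypothesis you do not have, and as written your plan would establish the corollary rather than the proposition. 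If you wish to make the lower bound rigorous you must do so using only the valid-EDM hypothesis; the paper itself does not engage with this issue at all and stops at the direct count.
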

\begin{proof}
  Let $j\in V$ such that $K<j\le n$. Assume that every realization of $G$ yields a matrix $D(U_j)$ which is a valid EDM. By definition, each of the $a_{\ell(j)}$ possible positions for vertex $\ell(j)$ gives rise to a valid realization $y[U_j]$ of $G[U_j]$. By Lemma \ref{ifflem}, every $D^y(\bar{U}_j)$ is a valid EDM. By trilateration, there are two positions for vertex $j$ for each $y$, which yields Eq.~\eqref{ubeq}.
\end{proof}

\subsection{An impossibility result}  \label{sub:impossibility}
The counterexample below shows what can go wrong if the condition of Prop.~\ref{iffthm} is not met. 
\begin{eg}
  \label{eg:edmfail}
  Consider the graph $G$ on $V=\{1,\ldots,5\}$ with edges
  \[ E=\{\{1,2\},\{1,3\},\{1,5\},\{2,3\},\{2,4\},\{3,4\},\{4,5\}\} \]
  and edge weights $d_{12}=d_{15}=d_{23}=d_{45}=1$, $d_{34}=2$, $d_{13}=\sqrt{2}$, $d_{24}=\sqrt{5}$, realized in $\mathbb{R}^2$. We assume that $x_1=(1,0)$, $x_2=(2,0)$, $x_3=(2,1)$. There are two possible positions for vertex $4$, namely $x_4^+ = (4,1)$, $x_4^-=(0,1)$, as shown in Fig.~\ref{f:edmfail}. However, $\|x_1-x_4^+\|_2=\sqrt{10}$ cannot form a triangle with segments realizing $\{1,5\},\{4,5\}$ both having unit length, since $d_{15}+d_{45}=2<\sqrt{10}$, which negates the triangular inequality on $1,4,5$. On the other hand, the position $x_5=(0,0)$ is compatible with $x_4^-$. 
\begin{figure}[!ht]
\begin{center}
\begin{tikzpicture}[scale=1.0]
\draw[help lines] (-1,-1) grid (3,2);
\draw [->] (-1,0) -- (3,0);
\draw [->] (0,-1) -- (0,2);
\fill (1,0) circle (5pt);
\fill (2,0) circle (5pt);
\fill (2,1) circle (5pt);
\fill (0,1) circle (5pt);
\fill (0,0) circle (5pt);
\draw [ultra thick] (1,0) -- (2,1);
\draw [ultra thick] (1,0) -- (2,0) -- (2,1) -- (0,1) -- (2,0);
\draw [ultra thick] (1,0) -- (0,0) -- (0,1);
\draw (1,-0.4) node(A)  {$x_1$};
\draw (2,-0.4) node(B)   {$x_2$};
\draw (2,1.4) node(C) {$x_3$};
\draw (0,1.4) node(D)  {$x_4^-$};
\draw (0,-0.4) node(4)  {$x_5$};
\end{tikzpicture}
\hspace*{1cm}
\begin{tikzpicture}[scale=1.0]
\draw[help lines] (-1,-1) grid (5,2);
\draw [->] (-1,0) -- (5,0);
\draw [->] (0,-1) -- (0,2);
\fill (1,0) circle (5pt);
\fill (2,0) circle (5pt);
\fill (2,1) circle (5pt);
\fill (4,1) circle (5pt);
\draw [ultra thick] (1,0) -- (2,1);
\draw [ultra thick] (1,0) -- (2,0) -- (2,1) -- (4,1) -- (2,0);
\draw (1,-0.4) node(A)  {$x_1$};
\draw (2,-0.4) node(B)   {$x_2$};
\draw (2,1.4) node(C) {$x_3$};
\draw (4,1.4) node(D)  {$x_4^+$};
\end{tikzpicture}
\end{center}
\caption{Only the realization with $x^-_4$ is feasible (left). The one with $x^+_4$ is not.}
\label{f:edmfail}
\end{figure}
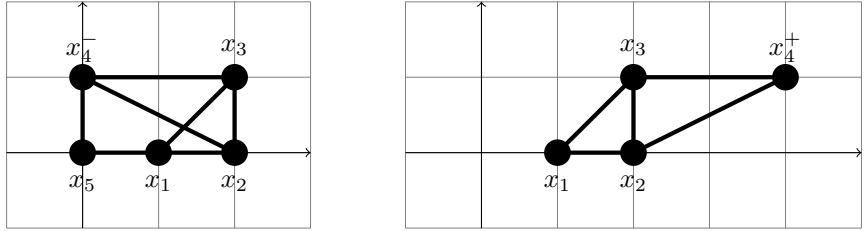
In this case, trilateration would return $S_4$ as the singleton $\{x^-_4\}$, rather than ensuring $|S_4|\in\{0,2\}$ as expected. Note that the above instance is not a ``probability zero instance'', as all $U_j$'s are realized in general position. Generalizations of this counterexample can be obtained for all $K$. \qedd
\end{eg}
The counterexample in Ex.~\ref{eg:edmfail} showcases the necessity of the condition that each matrix $D^y(\bar{U}_j)$ needs to be a valid EDM. Verifying this condition involves checking that all of the matrices $\Gamma_j=-\frac{1}{2}JD^y(\bar{U}_j)J$ are Gram. By the equivalence of Gram and positive semidefinite (psd) matrices, this is equivalent to verifying that all of the $\Gamma_j$'s are psd.

\begin{thm}
  \label{imposs}
  The solutions of the DDGP cannot be counted combinatorially.
\end{thm}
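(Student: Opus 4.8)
The plan is to lean on the definition of a \emph{combinatorial counting method} from Section~\ref{s:counting}: such a method may read off only the graph topology, i.e.\ the tuple $(G,K,<,(U_j)_{K<j\le n})$, and must return the number of incongruent realizations without ever inspecting the numeric edge weights $d$. To refute the existence of any such method it therefore suffices to exhibit two \emph{probability one} DDGP instances that share exactly the same topology but admit different numbers of incongruent realizations: no function of the topology alone can output the correct count for both. This also dovetails with Prop.~\ref{iffthm}, which shows that counting succeeds ($a_j=2a_{\ell(j)}$, hence a power of two) precisely when every $D^y(\bar{U}_j)$ is a valid EDM for all realizations; the impossibility will come from instances where this fails, and from the fact that \emph{whether} it fails is governed by the weights, not the graph.

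First I would take the topology underlying Example~\ref{eg:edmfail}, whose crucial feature is that $U_5=\{1,4\}$ while $\{1,4\}\notin E$, so that $G[U_5]$ is \emph{not} a $K$-clique and the distance $\|y_1-y_4\|_2$ is not prescribed by $d$ but is fixed only once a branch for vertex $4$ has been chosen. Keeping $G$, the order $<$, and all the sets $U_j$ fixed, and keeping the weights $d_{12},d_{13},d_{23},d_{24},d_{34}$ (which alone determine the positions of $1,2,3,4$, and hence the two branches $x_4^+,x_4^-$) unchanged, I would produce two instances differing only in $d_{15}$ and $d_{45}$. In the first I keep $d_{15}=d_{45}=1$ as in Example~\ref{eg:edmfail}: then for the branch $x_4^+$ one has $\|x_1-x_4^+\|_2=\sqrt{10}>d_{15}+d_{45}$, the triangular inequality on $1,4,5$ fails, $D^y(\bar{U}_5)$ is not a valid EDM, and trilateration returns \emph{no} position for vertex $5$ on that branch. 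In the second I enlarge the two weights, e.g.\ $d_{15}=d_{45}=2$, chosen so that the triangular inequality on $1,4,5$ now holds for \emph{both} values $\|x_1-x_4^\pm\|_2$; then $D^y(\bar{U}_5)$ is a valid EDM on both branches and trilateration yields two positions for vertex $5$ in each case.

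Counting then gives different totals. In the first instance the $x_4^+$ sub-branch dies while the $x_4^-$ sub-branch survives with two positions for vertex $5$; propagating this through the two (reflection-symmetric) branches of vertex $3$ gives four incongruent realizations. In the second instance both sub-branches of vertex $4$ survive with two positions each, giving eight. Crucially, the graph has \emph{no} pruning edges ($E_P=\varnothing$, since every vertex is adjacent only to the members of its $U_j$), so the discrepancy is produced purely by the failure of trilateration to deliver a \emph{pair} of positions, i.e.\ by the definiteness of the Gram matrix $\Gamma_5=-\frac{1}{2}JD^y(\bar{U}_5)J$, which depends on the numeric weights and not on the topology. Since the two instances have identical topology but counts $4\neq 8$, no combinatorial counting method can be correct on both, and the impossibility follows; the generalisation to arbitrary $K$ is obtained exactly as indicated after Example~\ref{eg:edmfail}.

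The step I expect to be most delicate is guaranteeing that both instances are genuinely \emph{probability one} instances, so that the count difference cannot be dismissed as a measure-zero artefact. I would therefore verify that in each instance every subset $W$ with $|W|\le K+1$ is realized in general position — that changing $d_{15},d_{45}$ introduces no accidental collinearity or tangency — and, if necessary, perturb the weights slightly within an open interval (in the spirit of Example~\ref{eg:triangle} and the deformation argument of Section~\ref{s:counting}) so that the surviving branch of the first instance and both branches of the second correspond to full-rank affine spans. This confirms that the jump in the number of solutions occurs between two open families of probability one instances sharing the same topology, which is exactly what is needed to rule out combinatorial counting.
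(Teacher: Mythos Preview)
Your argument is correct and rests on the same mechanism as the paper's proof: both exploit the topology of Example~\ref{eg:edmfail}, where $U_5=\{1,4\}$ fails to induce a clique, so that the ``missing'' distance $\|y_1-y_4\|$ depends on the chosen branch for vertex~$4$ and may or may not satisfy the triangular inequality with $d_{15},d_{45}$. The execution, however, is different. The paper keeps $d_{15},d_{45}$ fixed and lets $d_{24},d_{34}$ range over intervals, then computes explicit positive probabilities $P_0,P_1,P_2$ (under uniform sampling of the weights) that trilateration on vertex~$5$ returns $0$, $1$, or $2$ positions; the conclusion is phrased in the measure-theoretic language of Sect.~\ref{s:counting}. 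You instead keep the positions $x_4^{\pm}$ fixed and vary $d_{15},d_{45}$ so as to produce two \emph{explicit} probability-one instances with identical topology and distinct counts ($4$ versus $8$), which immediately rules out any function of the topology alone. Your route is more elementary and arguably cleaner --- two witnesses suffice, no interval calculus is needed, and the perturbation remark at the end handles the general-position check --- while the paper's interval construction makes more visible that the phenomenon is robust (it holds on open sets of weights, not just at two carefully chosen points) and ties in directly with the ``probability one'' framing adopted earlier. Either variant proves the theorem.
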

\begin{proof}
  Our definition of combinatorial counting (Sect.~\ref{s:counting}) states that acceptable methods may not consider the edge weights. We now construct an uncountable family of DDGP instances for which trilateration finds 0,1 or 2 positions for a certain vertex, all with positive probability. This shows that the edge weights must necessarily be taken into account by any counting method, and hence that this counting method cannot be combinatorial. We consider the case of Example \ref{eg:edmfail}: our strategy is to define intervals for $d_{24}$ and $d_{34}$ such that: (i) at the lower extrema trilateration on $5$ finds two valid positions for $x_5$, (ii) at the upper extrema trilateration on $5$ only finds one valid position (and hence fails) for $x_5$, and (iii) there are neighbourhoods of these extrema for which the same behaviours hold. This will show that the probability of trilateration failure to find either $0$ or $2$ positions is nonzero, and depends on the edge weights only. Therefore there can be no general combinatorial counting method for dealing with the totality of DDGP instances.

  In the rest of the proof (which simply consists of a long but easy symbolic calculation) we sometimes indicate distance between two vertices $u,v$ by $\overline{uv}$ for brevity. We generalize the instance in Example \ref{eg:edmfail} to the uncountable family of instances given by $d_{24}\in[\sqrt{1+\varepsilon^2},\sqrt{5}]$ and $d_{34}\in[\varepsilon,2]$, for some small enough $\varepsilon>0$. If we take the lower extrema of both intervals $d_{24}=\sqrt{1+\varepsilon^2}$ and $d_{34}=\varepsilon$ we obtain $x^+_4=(2+\varepsilon,1)$ and $x^-_4=(2-\varepsilon,1)$, whence
  \begin{eqnarray*}
    \overline{14^+}=\|x_1-x_4^+\|_2 &=& \sqrt{(-1-\varepsilon)^2+1} = \sqrt{2 + 2\varepsilon + \varepsilon^2} \\
    \overline{14^-}=\|x_1-x_4^-\|_2 &=& \sqrt{(-1+\varepsilon)^2+1} = \sqrt{2 - 2\varepsilon + \varepsilon^2}.
  \end{eqnarray*}
  When $\varepsilon$ is negligible, we have $\overline{14^+}\approx \sqrt{2}<2 = 1+1 = \overline{15}+\overline{4^+5}=d_{15}+d_{45}$ and the same for $\overline{14^-}$, which implies that both positions for vertex $4$ yield a distance $\overline{14}$ that satisfies the triangular inequality. As $\varepsilon$ grows, $\overline{14^-}$ decreases, which means that it satisfies the triangular inequality for all values of $d_{24},d_{34}$ in the respective intervals (as verified in Ex.~\ref{eg:edmfail}). We want to find the value of $\varepsilon$ at which $x_4^+$ satisfies the triangular inequality at equality, namely $\overline{14^+}=\overline{15}+\overline{4^+5}=d_{15}+d_{45}=2$. This happens at $\sqrt{2+2\varepsilon+\varepsilon^2}=2$, namely $\varepsilon^2+2\varepsilon-2=0$, i.e.~when $\varepsilon=\frac{-2\pm\sqrt{4+8}}{2}=-1\pm\sqrt{3}$. Since we assumed $\varepsilon>0$, $\varepsilon=-1+\sqrt{3}$ is the only value for which $\overline{14^+}=d_{15}+d_{45}=2$. Thus, the family of DDGP instances under scrutiny has the property that vertex $5$ has two valid positions (almost surely) for $d_{24}\in[1,\sqrt{5-2\sqrt{3}}]$, $d_{34}\in[0,-1+\sqrt{3}]$, only one position ($x_4^-$) for $d_{24}\in[\sqrt{5-2\sqrt{3}},\sqrt{5}]$, $d_{34}\in[-1+\sqrt{3},2]$, and zero positions in the remaining cases where no position for vertex $4$ exists.

  In other words, assuming uniform probability distributions over the two distance intervals for $d_{24},d_{34}$, we have shown that this DDGP instance family has (almost surely) $2p$ solutions (for some $p\in\mathbb{N}$) with probability $P_2=\frac{\sqrt{5-2\sqrt{3}}+\sqrt{3}-2}{\sqrt{5}+1}\approx 0.3$, $p$ solutions with probability $P_1=\frac{\sqrt{5}-\sqrt{5-2\sqrt{3}}-\sqrt{3}+1}{\sqrt{5}+1}\approx 0.08$, and $0$ solutions in the remaining events where $d_{24}$ is towards the lower extremum while $d_{34}$ is towards the upper one and {\it vice versa}, which have joint probability $P_0=1-\frac{P_2+P_1}{\sqrt{5}+1}=\frac{\sqrt{5}+1-(\sqrt{5}-1)}{\sqrt{5}+1}=\frac{2}{\sqrt{5}+1}\approx 0.62$. Note that $P_0,P_1,P_2>0$, as claimed.
\end{proof}
We note that it is also hard to imagine the existence of non-combinatorial counting method which does not require the realizations of $G$ prior to counting (Sect.~\ref{s:counting}): as mentioned above, we need to check that all of the matrices $\Gamma_j$ are psd, which typically requires the knowledge of the entries of $\Gamma_j$, which in turn requires $y[U_j]$, and hence the realizations of $G$, to be known a priori. 

Thm.~\ref{imposs} does not prevent the existence of counting techniques for subclasses of the DDGP, or based on a condition involving other parameters than $G,d,K$ (such as e.g.~the smallest eigenvalue over all $\Gamma_j$ being nonzero, which would make it easy to prove positive semidefiniteness), or taking into account special structures in the pruning edges. 

\subsection{A sufficient condition}
A combinatorial condition making sure that the $D^y(\bar{U}_j)$ are valid EDMs is that $G[U_j]$ should be a clique.
\begin{cor}
  \label{clique}
  Let $j\in V$ such that $K<j\le n$. If $G[U_j]$ is a clique of size $K$ in $G$, then $a_j=2 a_{\ell(j)}$.
\end{cor}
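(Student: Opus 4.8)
The plan is to derive the corollary directly from Proposition~\ref{iffthm}, whose hypothesis is precisely that $D^y(U_j)$ is a valid EDM for every realization $y$ of $G$. Thus the whole task reduces to showing that the clique assumption on $G[U_j]$ forces this hypothesis to hold.

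First I would observe that the clique condition is in fact stronger than it appears at the level of $U_j$ alone. Since $G[U_j]$ is a clique of size $K$ and, by definition of the DDGP, every $i\in U_j$ is adjacent to $j$, the induced subgraph $G[\bar{U}_j]$ is a clique of size $K+1$. Consequently every pairwise distance among the vertices of $\bar{U}_j$ --- both the internal distances of $U_j$ and the distances $d_{ij}$ --- is prescribed by the edge weight function $d$. Therefore the entire matrix $D^y(\bar{U}_j)$ of Eq.~\eqref{DM2} is determined by $d$ alone and does not depend on the realization $y$ (equivalently, on the branching choices made by the BP up to vertex $\ell(j)$). Since the instance is YES, there is at least one valid realization of $G$, and its restriction to $\bar{U}_j$ realizes exactly these $d$-entries; hence $D^y(\bar{U}_j)$, and a fortiori its principal submatrix $D^y(U_j)$, is a valid EDM. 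As this conclusion is independent of $y$, the hypothesis of Proposition~\ref{iffthm} is satisfied, and we obtain $a_j=2a_{\ell(j)}$.

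The main obstacle to get right is conceptual rather than computational: the bare statement that $D^y(U_j)$ is a valid EDM is automatic for any genuine point configuration, so the real force of the clique hypothesis lies in making the extended matrix $D^y(\bar{U}_j)$ edge-determined, and therefore robustly a valid EDM across all branches. This is exactly the property that fails in Example~\ref{eg:edmfail}, where $G[U_5]$ (with $U_5=\{1,4\}$) is not a clique: there the distance $\overline{14}$ is not an edge weight but depends on whether $x_4^+$ or $x_4^-$ was chosen, so that $D^y(\bar{U}_5)$ is a valid EDM for one branch and violates the triangular inequality for the other. I would close by noting that, in terms of the Gram matrices $\Gamma_j=-\frac{1}{2}J D^y(\bar{U}_j) J$, the clique hypothesis guarantees that each $\Gamma_j$ is a fixed positive semidefinite matrix independent of $y$, which is what ultimately ensures that trilateration on $j$ returns two positions for every admissible position of $\ell(j)$.
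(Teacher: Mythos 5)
Your proof is correct and follows essentially the same route as the paper's: both arguments observe that the clique hypothesis makes $D^y(U_j)$ (and, via the adjacency of every $i\in U_j$ to $j$, all of $D^y(\bar{U}_j)$) a constant matrix determined by the given edge weights, which the YES assumption certifies as a valid EDM, so that Prop.~\ref{iffthm} applies. Your added observation that the real content lies in $D^y(\bar{U}_j)$ being edge-determined across all branches is a fair, slightly more explicit gloss on what the paper compresses into ``for the same reason''.
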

\begin{proof}
  It suffices to remark that, since all of the $y[U_j]$'s are valid realizations of $G[U_j]$, they must satisfy the given distance constraints. Therefore, $D^y(U_j)$ is simply the EDM for the clique $G[U_j]$, which is constant since the distance values are given for all the edges, and does not depend on $y$. Since we are assuming that the DDGP instance is YES, $D(U_j)$ is a valid EDM. For the same reason, $D(\bar{U}_j)$ is also a valid EDM. 
\end{proof}
We also remark that Cor.~\ref{clique} cannot be improved in general terms, for example by asking that $G[U_j]$ is a clique without one or a few edges, since Ex.~\ref{eg:edmfail} portrays a failure when a single edge is missing from the clique on $G[U_5]$. 

This shows that a combinatorial counting of the number of solutions of DDGP instances prior to actually solving the instance is only possible in the special case where all of the $U_j$'s induce cliques of size $K$ in $G$. We call the class of such DDGP instances the {\it combinatorial DDGP}.
\begin{cor}
  \label{ddgpcliquecount}
  For a combinatorial DDGP instance with discretization edges only, the number of incongruent realizations of $G$ is $2^{n-K}$ almost surely. 
\end{cor}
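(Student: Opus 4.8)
The plan is to combine Corollary \ref{clique} with the absence of pruning edges to show that the Branch-and-Prune tree is a complete binary tree of depth $n-K$, and then to argue that its $2^{n-K}$ leaves are pairwise incongruent almost surely.

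First I would fix, once and for all, a realization of the initial clique $G[U_0]$: since $G[U_0]$ is a clique of size $K$ whose pairwise distances are all prescribed and the instance is YES, this realization exists and is unique up to congruence almost surely, so choosing a representative factors out the congruences of $\mathbb{R}^K$ and gives $a_1=\cdots=a_K=1$.

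Next I would prove by induction on $j\in\{K+1,\dots,n\}$ that the number of valid partial realizations of $G[\{1,\dots,j\}]$ extending the fixed initial clique equals $2^{j-K}$. The inductive step rests on two facts. Since the instance is combinatorial, $G[U_j]$ is a clique of size $K$, so by the argument of Corollary \ref{clique} the matrix $D^y(\bar{U}_j)$ does not depend on the branch $y$: it is the fixed EDM determined by the prescribed edge weights on $\bar{U}_j$, and it is a valid EDM because the instance is YES. Hence trilateration applied to $U_j$ yields exactly two positions for $j$ (almost surely), both compatible with the distances to $U_j$, for every partial realization reaching level $j-1$. Because there are no pruning edges, the only predecessor-neighbours of $j$ are the vertices of $U_j$, so no further distance constraint can discard either position. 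Consequently each of the $2^{j-1-K}$ partial realizations extends in exactly two ways, doubling the count to $2^{j-K}$; at $j=n$ this gives $2^{n-K}$ full realizations, naturally parametrized by their sign sequences in $\{+,-\}^{n-K}$.

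Finally I would show that these $2^{n-K}$ realizations are pairwise incongruent almost surely, which I expect to be the main obstacle. The key point is that, since the first $K$ coordinates are held fixed, any congruence carrying one (labelled) realization to another must fix $x_1,\dots,x_K$ pointwise, and hence must fix their affine hull $H$; the only such orientation-preserving isometry is the identity, so two realizations coming from different sign sequences cannot be related by a proper congruence. Almost surely the $y[U_j]$ are in general position and each trilateration genuinely returns two distinct positions, so distinct sign sequences do give distinct realizations; combined with the previous remark this yields $2^{n-K}$ incongruent realizations. The delicate part is the ``almost surely'' bookkeeping (simultaneous general position of every $y[U_j]$ and non-degeneracy of every trilateration), together with being explicit that the count is taken up to orientation-preserving congruence, so that a realization and its mirror image across $H$ are kept distinct --- consistently with $a_{K+1}=2$.
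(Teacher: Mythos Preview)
Your argument is correct and follows the same route as the paper's proof, which simply invokes $a_1=\cdots=a_K=1$, $a_{K+1}=2$, and Cor.~\ref{clique} to obtain the doubling at every level. Your version is a careful expansion of that one-line sketch; the explicit incongruence argument and the remark that the count is modulo orientation-preserving congruence are details the paper leaves implicit.
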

\begin{proof}
  This follows by $a_1=\cdots=a_K=1$, $a_{K+1}=2$, and Cor.~\ref{clique}.
\end{proof}
We remark that Cor.~\ref{ddgpcliquecount} applies to DMDGP instances. This provides an alternative proof to the result that DMDGP instances with discretization edges only have $2^{n-K}$ incongruent solutions almost surely.



\section{Conclusion}
An important property of DMDGP orders,  mainly in applications related to protein conformation,  is that given a DMDGP solution (calculated by any algorithm applied to the DMDGP),  all the others can be obtained just using the DMDGP symmetries \cite{symmBPjbcb}. Whether there are symmetric properties similar to the DMDGP case at least for the combinatorial DDGP remains an open question.

\section*{Acknowledgements}
CL is grateful to FAPESP and CNPq for support. LL is partly supported by the European Union's Horizon 2020 research and innovation programme under the Marie Sklodowska-Curie grant agreement n. 764759 ETN ``MINOA". AM and LL are grateful to ANR for partly supporting this research under PRCI grant ``MultiBioStruct''.

\bibliographystyle{plain}
\bibliography{ddgpsymm2}

\begin{thebibliography}{10}

\bibitem{optlet18b}
G.~Abud, J.~Alencar, C.~Lavor, L.~Liberti, and A.~Mucherino.
\newblock The $k$-discretization and $k$-incident graphs for discretizable
  distance geometry.
\newblock {\em Optimization Letters}, 14:469--482, 2020.

\bibitem{alencar2}
J.~Alencar, C.~Lavor, and L.~Liberti.
\newblock Realizing euclidean distance matrices by sphere intersection.
\newblock {\em Discrete Applied Mathematics}, 256:5--10, 2019.

\bibitem{blum}
L.~Blum, M.~Shub, and S.~Smale.
\newblock On a theory of computation and complexity over the real numbers:
  {NP}-completeness, recursive functions, and universal machines.
\newblock {\em Bulletin of the AMS}, 21(1):1--46, 1989.

\bibitem{protti}
R.~Carvalho, C.~Lavor, and F.~Protti.
\newblock Extending the geometric build-up algorithm for the molecular distance
  geometry problem.
\newblock {\em Information Processing Letters}, 108:234--237, 2008.

\bibitem{bipbip}
A.~Cassioli, B.~Bordeaux, G.~Bouvier, A.~Mucherino, R.~Alves, L.~Liberti,
  M.~Nilges, C.~Lavor, and T.~Malliavin.
\newblock An algorithm to enumerate all possible protein conformations
  verifying a set of distance constraints.
\newblock {\em BMC Bioinformatics}, 16:23--38, 2015.

\bibitem{orders-dam}
A.~Cassioli, O.~G{\"u}nl{\"u}k, C.~Lavor, and L.~Liberti.
\newblock Discretization vertex orders for distance geometry.
\newblock {\em Discrete Applied Mathematics}, 197:27--41, 2015.

\bibitem{sidechains}
V.~Costa, A.~Mucherino, C.~Lavor, A.~Cassioli, L.~Carvalho, and N.~Maculan.
\newblock Discretization orders for protein side chains.
\newblock {\em Journal of Global Optimization}, 60:333--349, 2014.

\bibitem{zoo}
C.~D'Ambrosio, Ky~Vu, C.~Lavor, L.~Liberti, and N.~Maculan.
\newblock New error measures and methods for realizing protein graphs from
  distance data.
\newblock {\em Discrete and Computational Geometry}, 57(2):371--418, 2017.

\bibitem{dattorro}
J.~Dattorro.
\newblock {\em Convex Optimization and {E}uclidean Distance Geometry}.
\newblock $\mathcal{M}\epsilon\beta oo$, Palo Alto, 2015.

\bibitem{vetterli}
I.~Dokmani\'c, R.~Parhizkar, J.~Ranieri, and M.~Vetterli.
\newblock Euclidean distance matrices: Essential theory, algorithms and
  applications.
\newblock {\em IEEE Signal Processing Magazine}, 1053-5888:12--30, Nov. 2015.

\bibitem{goncalves}
D.~Gon\c{c}alves and A.~Mucherino.
\newblock Discretization orders and efficient computation of cartesian
  coordinates for distance geometry.
\newblock {\em Optimization Letters}, 8:2111--2125, 2014.

\bibitem{graverbook}
J.~Graver, B.~Servatius, and H.~Servatius.
\newblock {\em Combinatorial Rigidity}.
\newblock AMS, 1993.

\bibitem{henneberg1911}
L.~Henneberg.
\newblock {\em Die {G}raphische {S}tatik der starren {S}ysteme}.
\newblock Teubner, Leipzig, 1911.

\bibitem{dvop}
C.~Lavor, J.~Lee, A.~{Lee-St.~John}, L.~Liberti, A.~Mucherino, and
  M.~Sviridenko.
\newblock Discretization orders for distance geometry problems.
\newblock {\em Optimization Letters}, 6:783--796, 2012.

\bibitem{neworder}
C.~Lavor, L.~Liberti, B.~Donald, B.~Worley, B.~Bardiaux, T.~Malliavin, and
  M.~Nilges.
\newblock Minimal {NMR} distance information for rigidity of protein graphs.
\newblock {\em Discrete Applied Mathematics}, 256:91--104, 2019.

\bibitem{dmdgp}
C.~Lavor, L.~Liberti, N.~Maculan, and A.~Mucherino.
\newblock The discretizable molecular distance geometry problem.
\newblock {\em Computational Optimization and Applications}, 52:115--146, 2012.

\bibitem{jogomdgp}
C.~Lavor, A.~Mucherino, L.~Liberti, and N.~Maculan.
\newblock On the computation of protein backbones by using artificial backbones
  of hydrogens.
\newblock {\em Journal of Global Optimization}, 50:329--344, 2011.

\bibitem{souza}
C.~Lavor, M.~Souza, L.~Mariano, and L.~Liberti.
\newblock On the polinomiality of finding ${}^k${DMDGP} re-orders.
\newblock {\em Discrete Applied Mathematics}, 267:190--194, 2019.

\bibitem{dgbook}
L.~Liberti and C.~Lavor.
\newblock {\em Euclidean Distance Geometry: An Introduction}.
\newblock Springer, New York, 2017.

\bibitem{liberti-gsi13}
L.~Liberti, C.~Lavor, J.~Alencar, and G.~Abud.
\newblock Counting the number of solutions of ${}^k${DMDGP} instances.
\newblock In F.~Nielsen and F.~Barbaresco, editors, {\em Geometric Science of
  Information}, volume 8085 of {\em LNCS}, pages 224--230, New York, 2013.
  Springer.

\bibitem{lln5}
L.~Liberti, C.~Lavor, and N.~Maculan.
\newblock A branch-and-prune algorithm for the molecular distance geometry
  problem.
\newblock {\em International Transactions in Operational Research}, 15:1--17,
  2008.

\bibitem{dgp-sirev}
L.~Liberti, C.~Lavor, N.~Maculan, and A.~Mucherino.
\newblock Euclidean distance geometry and applications.
\newblock {\em SIAM Review}, 56(1):3--69, 2014.

\bibitem{bppolybook}
L.~Liberti, C.~Lavor, and A.~Mucherino.
\newblock The discretizable molecular distance geometry problem seems easier on
  proteins.
\newblock In A.~Mucherino, C.~Lavor, L.~Liberti, and N.~Maculan, editors, {\em
  Distance Geometry: Theory, Methods, and Applications}, pages 47--60.
  Springer, New York, 2013.

\bibitem{powerof2}
L.~Liberti, B.~Masson, C.~Lavor, J.~Lee, and A.~Mucherino.
\newblock On the number of realizations of certain {H}enneberg graphs arising
  in protein conformation.
\newblock {\em Discrete Applied Mathematics}, 165:213--232, 2014.

\bibitem{powerof2-conf}
L.~Liberti, B.~Masson, J.~Lee, C.~Lavor, and A.~Mucherino.
\newblock On the number of solutions of the discretizable molecular distance
  geometry problem.
\newblock In {\em Combinatorial Optimization, Constraints and Applications
  (COCOA11)}, volume 6831 of {\em LNCS}, pages 322--342, New York, 2011.
  Springer.

\bibitem{jcim}
T.~Malliavin, A.~Mucherino, C.~Lavor, and L.~Liberti.
\newblock Systematic exploration of protein conformational space using a
  distance geometry approach.
\newblock {\em Journal of Chemical Information and Modeling}, 59:4486--4503,
  2019.

\bibitem{muchbook}
T.~Malliavin, A.~Mucherino, and M.~Nilges.
\newblock Distance geometry in structural biology: new perspectives.
\newblock In A.~Mucherino, C.~Lavor, L.~Liberti, and N.~Maculan, editors, {\em
  Distance Geometry: Theory, Methods, and Applications}, pages 329--350.
  Springer, New York, 2013.

\bibitem{ddgp}
A.~Mucherino, C.~Lavor, and L.~Liberti.
\newblock The discretizable distance geometry problem.
\newblock {\em Optimization Letters}, 6:1671--1686, 2012.

\bibitem{symmBPjbcb}
A.~Mucherino, C.~Lavor, and L.~Liberti.
\newblock Exploiting symmetry properties of the discretizable molecular
  distance geometry problem.
\newblock {\em Journal of Bioinformatics and Computational Biology},
  10:1242009(1--15), 2012.

\bibitem{dgpbook}
A.~Mucherino, C.~Lavor, L.~Liberti, and N.~Maculan, editors.
\newblock {\em Distance Geometry: Theory, Methods, and Applications}.
\newblock Springer, New York, 2013.

\bibitem{iBP-conf}
A.~Mucherino, C.~Lavor, T.~Malliavin, L.~Liberti, M.~Nilges, and N.~Maculan.
\newblock Influence of pruning devices on the solution of molecular distance
  geometry problems.
\newblock In P.~Pardalos and S.~Rebennack, editors, {\em Experimental
  Algorithms}, volume 6630 of {\em LNCS}, pages 206--217, Berlin, 2011.
  Springer.

\bibitem{saxe79}
J.~Saxe.
\newblock Embeddability of weighted graphs in $k$-space is strongly {NP}-hard.
\newblock {\em Proceedings of 17th Allerton Conference in Communications,
  Control and Computing}, pages 480--489, 1979.

\bibitem{schoenberg}
I.~Schoenberg.
\newblock Remarks to {M}aurice {F}r\'echet's article ``{S}ur la d\'efinition
  axiomatique d'une classe d'espaces distanci\'es vectoriellement applicable
  sur l'espace de {H}ilbert".
\newblock {\em Annals of Mathematics}, 36(3):724--732, 1935.

\bibitem{tay-whiteley}
T.-S. Tay and W.~Whiteley.
\newblock Generating isostatic frameworks.
\newblock {\em Structural Topology}, 11:21--69, 1985.

\bibitem{yemini}
Y.~Yemini.
\newblock Some theoretical aspects of position-location problems.
\newblock In {\em Proceedings of the 20th Annual Symposium on the Foundations
  of Computer Science}, pages 1--8, Piscataway, 1979. IEEE.

\end{thebibliography}

\end{document}